\documentclass[conference,10pt]{IEEEtran}
\usepackage{amsthm}
\usepackage{amsmath}
\usepackage{amsfonts,amssymb}
\usepackage{stmaryrd}
\usepackage[usenames,dvipsnames]{xcolor}
\usepackage{multirow}
\usepackage{longtable}
\usepackage[all,color, pdftex]{xy}
\usepackage{enumerate}
\usepackage{cases}
\usepackage{color}
\usepackage{graphicx}
\usepackage{caption}
\usepackage{subcaption}

\interdisplaylinepenalty=2500

\newtheorem{theorem}{Theorem}[section]
\newtheorem{corollary}[theorem]{Corollary}
\newtheorem{lemma}[theorem]{Lemma}

\theoremstyle{plain}

\theoremstyle{remark}
\newtheorem{definition}{Definition}
\newtheorem{example}{Example}

\newcommand{\A}{{\mathcal A}}
\newcommand{\B}{{\mathcal B}}
\newcommand{\C}{{\mathcal C}}
\newcommand{\Code}{{\mathcal C}}

\newcommand{\cR}{{\mathcal R}}
\newcommand{\sAA}{{\mathbb A}}
\newcommand{\sBB}{{\mathbb B}}
\newcommand{\sCC}{{\mathbb C}}
\newcommand{\nn}{{\mathbb N}}
\newcommand{\ff}{{\mathbb F}}

\newcommand{\bldc}{{\mbox{\boldmath $c$}}}
\newcommand{\bldx}{{\mbox{\boldmath $x$}}}
\newcommand{\bldy}{{\mbox{\boldmath $y$}}}

\newcommand{\bldsigma}{{\mbox{\boldmath $\sigma$}}}

\begin{document}

\sloppy

\title{Bounds for Batch Codes with Restricted Query Size}
\author{
\IEEEauthorblockN{\bf Hui Zhang}%
\IEEEauthorblockA{Department of Computer Science\\
Technion -- Israel Institute of Technology\\
Haifa 32000, Israel}%
\and%
\IEEEauthorblockN{\bf Vitaly Skachek}%
\IEEEauthorblockA{Institute of Computer Science\\
University of Tartu\\ 
Tartu 50409, Estonia}%
}

\maketitle
\footnotetext[1]{This work was done while the first author was with the Institute of Computer Science, University of Tartu, Tartu 50409, Estonia. This work is supported in part by the grants PUT405 and IUT2-1 from the Estonian Research Council and by the EU COST Action IC1104.}

\begin{abstract}
We present new upper bounds on the parameters of batch codes with restricted query size. These bounds are an improvement on the Singleton bound. The techniques for derivations of these bounds are based on the ideas in the literature for codes with locality. By employing additional ideas, we obtain further improvements, 
which are specific for batch codes. 
\end{abstract}

\section{Introduction}
Batch codes were originally proposed in~\cite{Ishai} for load balancing in the distributed server systems. They can also potentially be used in the application for private information retrieval~\cite{Vardy}. 
Batch codes are also known as \emph{switch codes}, and they were studied in~\cite{Chee, WKC2015, Wang2013} in the context of network switches. 
Combinatorial batch codes were studied, for example, in~\cite{Bhattacharya,Brualdi,Bujtas,Gal}.  
Several constructions of families of batch codes using graph-theoretic tools were recently presented in~\cite{dimakis-batch}. 

Batch codes have a lot of similarities with so-called \emph{locally-repairable} codes, or \emph{codes with locality}, which are of potential use in the distributed storage systems~\cite{Cadambe, dimakis-survey, FY2014,Yekhanin2012,Prakash, RPDV2014,RMV2014, RMV2014-journal, Tamo}. 

This work continues the line of research that was started in~\cite{Yekhanin2012}, where the bounds on the parameters of codes with locality of a single symbol, which improve on the Singleton bound, were presented. Further, the authors of~\cite{RPDV2014} generalize that approach towards codes \emph{with locality and availability}, where every symbol can be recovered from several recovery sets. This work was further generalized to codes, which allow for cooperative recovery of several symbols~\cite{RMV2014}. A variation of that model, where several (possibly different) symbols are recovered in parallel, was studied in~\cite{Paudyal}.   

In this work, by building on the ideas in~\cite{Yekhanin2012,RPDV2014,RMV2014}, we derive new upper bounds on the parameters of batch codes. 
The resulting bounds are an improvement on the classical Singleton bound, and they do not depend on the size of the underlying alphabet. 
By using some additional ideas, we further improve the resulting bound. 

\section{Notation}

Throughout the paper, we denote by $\nn$ the set of nonnegative integers. For $n \in \nn$, we denote $[n] \triangleq \{1, 2, \cdots, n\}$.

We start with the definition of a new family of $(k,n,r,t)$-batch codes with restricted query size, which is a variation of the definition of \emph{primitive batch codes} in~\cite{Ishai}.
\begin{definition}
A {\em primitive $(k,n,r,t)$ batch code $\Code$ with restricted query size} over an alphabet $\Sigma$ encodes a string $\bldx\in\Sigma^k$ into a string $\bldy=\Code(\bldx)\in\Sigma^n$, such that for all multisets of indices $\{i_1,i_2, \dots,i_t\}$, where all $i_j \in [k]$, each of the entries $x_{i_1},x_{i_2},\dots,x_{i_t}$ can be retrieved independently of each other by reading at most $r$ symbols of $\bldy$. It is assumed that the symbols used to retrieve each of the variables $x_{i_j}$, for $1\leq j \leq t$, are all disjoint.
\label{def:batch-code}
\end{definition}
This definition is different from that of the primitive batch codes in~\cite{Ishai} by adding an additional restriction that each symbol is recovered by reading 
at most $r$ symbols of $\bldy$. In sequel, we will simply call the codes in Definition~\ref{def:batch-code} the $(k,n,r,t)$ \emph{batch codes}.

We will use the notation $d$ for the minimum (Hamming) distance of the $(k,n,r,t)$ batch code $\Code \triangleq \{\bldy \in \Code(\bldx) \; : \; \bldx \in \Sigma^k\}$. 

If the code alphabet $\Sigma$ is a finite field $\ff$, and for all $\bldx \in \ff^k$, $\Code(\bldx)$ is a linear transformation, then 
the corresponding batch code is \emph{linear}. Otherwise, the batch code is \emph{non-linear}. 

The following simple result was observed in~\cite{Lipmaa} for binary linear codes and in~\cite{Zumbraegel} for non-linear codes over general $\Sigma$.
\begin{lemma}
If a $(k,n,r,t)$ batch code has the minimum distance $d$, then $d\geq t$.
\label{lem:min-distance}
\end{lemma}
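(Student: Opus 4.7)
The plan is to argue by contradiction: assume $d \le t-1$ and exhibit two distinct codewords that would have to collapse to the same encoded message, violating injectivity of $\Code$.

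First, I would fix an arbitrary coordinate $i \in [k]$ and consider the ``degenerate'' multiset query $\{i,i,\dots,i\}$ in which the same index $i$ is requested $t$ times. By Definition~\ref{def:batch-code}, the batch-code property guarantees $t$ pairwise disjoint subsets $R_1, R_2, \dots, R_t \subseteq [n]$, each of cardinality at most $r$, together with $t$ (possibly distinct) decoding functions $\phi_1,\dots,\phi_t$ such that $\phi_j(\bldy|_{R_j}) = x_i$ for every message $\bldx \in \Sigma^k$ and every $\bldy = \Code(\bldx)$. The disjointness of the $R_j$'s is exactly the condition built into the definition.

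Next, suppose toward contradiction that $d \le t-1$. Then there exist distinct messages $\bldx \neq \bldx'$ whose encodings $\bldy = \Code(\bldx)$ and $\bldy' = \Code(\bldx')$ differ in at most $t-1$ coordinates. Let $D \subseteq [n]$ denote the set of positions where $\bldy$ and $\bldy'$ disagree, so $|D| \le t-1$. Because $R_1,\dots,R_t$ are pairwise disjoint, at most $t-1$ of them can intersect $D$, hence by pigeonhole at least one index $j^{\star} \in [t]$ satisfies $R_{j^{\star}} \cap D = \emptyset$. On that recovery set we have $\bldy|_{R_{j^{\star}}} = \bldy'|_{R_{j^{\star}}}$, and applying the common decoder $\phi_{j^{\star}}$ yields $x_i = \phi_{j^{\star}}(\bldy|_{R_{j^{\star}}}) = \phi_{j^{\star}}(\bldy'|_{R_{j^{\star}}}) = x'_i$.

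Since $i$ was arbitrary, the same reasoning applies to every coordinate of the message (the recovery sets $R_1,\dots,R_t$ of course depend on $i$, but the conclusion $x_i = x'_i$ does not). Therefore $\bldx = \bldx'$, contradicting $\bldy \neq \bldy'$. Hence $d \ge t$, as required. There is no real obstacle here; the only subtle point is to notice that the disjointness clause in Definition~\ref{def:batch-code} applies to the degenerate query where all requested indices coincide, which is what furnishes the $t$ disjoint recovery sets for a single symbol.
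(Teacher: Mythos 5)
Your proof is correct and is essentially the paper's own argument, just written out in more detail: both rely on querying the degenerate multiset $\{i,i,\dots,i\}$, using the disjointness of the $t$ recovery sets, and observing that if two codewords differed in fewer than $t$ positions some recovery set would lie entirely in the agreement region and force $x_i = x'_i$. The only cosmetic difference is that the paper fixes $i$ to be a coordinate where the two messages already disagree and contradicts that directly, whereas you let $i$ range over all coordinates and contradict $\bldx \neq \bldx'$.
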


\begin{proof}
For any two strings $\bldx_1,\bldx_2\in\Sigma^k$, there is at least one coordinate, where they are different, denote it by $i$. 
Then, in the codewords $\bldy_1= \Code (\bldx_1),\bldy_2= \Code (\bldx_2)$, there exist at least $t$ coordinates, where they are different. Otherwise, it is not possible to retrieve the $t$-tuple of coordinates $\{i,i,\dots,i\}$.
\end{proof}

The $(k,n,r,t)$ batch code satisfies the Singleton bound, that is, $d\leq n-k+1$. By Lemma~\ref{lem:min-distance}, the rate of the code is 
$$\frac{k}{n}\leq \frac{k}{k+d-1}\leq \frac{k}{k+t-1} \; .$$
\medskip 

In this paper, we present a new upper bound on the parameters of a linear batch code, when the query size is restricted. 
The primary technique is based on that used for derivation of bounds on the distance of locally recoverable codes in the series of 
works~\cite{Yekhanin2012,RPDV2014,RMV2014}. By using some additional ideas, we are able to further improve the obtained bound.

\section{Upper Bound on the Parameters of Batch Codes}

For a code $\C\subseteq\Sigma^n$, $S\subseteq [n]$ and $\bldy \in \C$, let $\bldy|_S\subseteq\Sigma^{|S|}$ denote the subword of $\bldy$ composed of the symbols indexed by $S$.

In this section, we present an algorithm akin to that in \cite{Yekhanin2012,RPDV2014,RMV2014}, which allows to derive an upper bound for the minimum distance of a batch code. Throughout the paper, we consider linear $(k,n,r,t)$ batch codes over the field $\ff$, whose size is an arbitrary prime power.

We will use the following auxiliary results. 

\begin{lemma}
Let $\Code$ be a linear $(k,n,r,t)$ batch code over the finite field $\ff$, $\bldx \in \ff^k$, $\bldy = \C(\bldx)$, 
and assume that the set of coordinates of $\bldy$ indexed by $S \subseteq [n]$ is used to recover $x_i$ for some $i$. 
Then, there exists $\ell \in S$, such that if we fix the values of $x_i$ and of $\bldy|_{S \backslash \{ \ell \}}$, 
the value of $y_\ell$ will be uniquely determined. 
\label{lemma:sets}
\end{lemma}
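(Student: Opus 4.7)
The plan is to exploit linearity of $\Code$ in order to convert the ``recovery from $S$'' hypothesis into a single nontrivial linear relation binding $x_i$ to the coordinates $\{y_j : j \in S\}$; once such a relation is in hand, any coordinate appearing with a nonzero coefficient can serve as the desired $\ell$.

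First I would argue that there exist field elements $\alpha_j \in \ff$, for $j \in S$, such that
$$x_i \; = \; \sum_{j \in S} \alpha_j \, y_j$$
holds for every codeword $\bldy = \Code(\bldx)$. Let $V \subseteq \ff^{|S|}$ denote the image of the linear map $\bldx \mapsto \Code(\bldx)|_S$. The assumption that $S$ recovers $x_i$ means that there is a well-defined map $g : V \to \ff$ with $g(\Code(\bldx)|_S) = x_i$ for all $\bldx$: if two inputs yielded the same $\bldy|_S$ but different $x_i$, no decoding function could recover $x_i$ from $\bldy|_S$. Since $\Code$ and the projection $\bldx \mapsto x_i$ are both linear, $g$ is linear on $V$, and therefore extends to a linear functional on all of $\ff^{|S|}$, giving the claimed identity.

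Next I would observe that at least one $\alpha_j$ must be nonzero, since otherwise $x_i$ would vanish on every codeword, contradicting that $\Code$ encodes arbitrary $\bldx \in \ff^k$. Fix any $\ell \in S$ with $\alpha_\ell \neq 0$. Rearranging the relation gives
$$y_\ell \; = \; \alpha_\ell^{-1}\Bigl( x_i - \sum_{j \in S \setminus \{\ell\}} \alpha_j \, y_j \Bigr),$$
so $y_\ell$ is uniquely determined once $x_i$ and $\bldy|_{S \setminus \{\ell\}}$ are fixed, as required.

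The main step in this plan is the first one: justifying that the recovery function from $\bldy|_S$ to $x_i$ can be taken to be a linear functional. This is precisely where the hypothesis that $\Code$ is linear enters, and it is the only nontrivial part of the argument; everything else is routine linear algebra. The degenerate case $|S| = 1$ needs no separate treatment, as the formula above reduces to $y_\ell = \alpha_\ell^{-1} x_i$ with an empty sum.
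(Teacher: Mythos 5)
Your proof is correct, but it takes a different route from the paper's. You argue forward: since $\Code$ is linear and $\bldy|_S$ determines $x_i$, the (forced) recovery map $g$ on the subspace $V=\{\Code(\bldx)|_S : \bldx\in\ff^k\}$ is itself linear, extends to a functional $x_i=\sum_{j\in S}\alpha_j y_j$, and any coordinate with $\alpha_\ell\neq 0$ (one must exist, else $x_i\equiv 0$) can be solved for. All the steps check out: well-definedness of $g$ follows from the recovery hypothesis, its linearity from linearity of the encoding and of the projection $\bldx\mapsto x_i$, and the extension to $\ff^{|S|}$ is routine. The paper instead argues by contraposition and iteration: if some $y_\ell$ is \emph{not} determined by $x_i$ and $\bldy|_{S\setminus\{\ell\}}$, it exhibits a codeword difference supported (within $S$) only at $\ell$ and with zero $i$-th information coordinate, concludes via translation by multiples of that difference that $y_\ell$ is useless for recovering $x_i$, deletes it, and repeats until reaching a minimal nonempty recovery set $S_0$ in which \emph{every} coordinate is uniquely determined. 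Your argument is shorter and more transparent for the lemma as stated; the paper's buys the slightly stronger structural by-product of a minimal recovery set all of whose symbols are determined, which is the picture it leans on when deriving Corollary~\ref{cor:sets}. Your functional-based argument delivers that corollary just as well (apply it to each $S_2,\dots,S_t$ separately), so nothing downstream is lost.
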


\begin{proof}
\begin{enumerate}
\item
Pick a random $\ell \in S$ and assume that $y_\ell$ is not uniquely determined given the values of $x_i$ and of $\bldy|_{S \backslash \{ \ell \}}$. 
Then, there exist two words $\bldy_1 = \Code(\bldx_1)$ and $\bldy_2 = \Code(\bldx_2)$, such that $\bldx_1$ and $\bldx_2$ coincide on coordinate $i$, 
and $\bldy_1$ and $\bldy_2$ coincide on the coordinates indexed by 
$S \backslash \{ \ell \}$, but differ on the coordinate $\ell$. Consider the word $\bldy_1 - \bldy_2 = \Code(\bldx_1 - \bldx_2)$. 
It has zeros in all coordinates indexed by $S \backslash \{ \ell \}$, yet its $\ell$'s coordinate is nonzero. 
Additionally, the $i$-th coordinate of $\bldx_1 - \bldx_2$ is zero. 

Next, take an arbitrary codeword $\bldc \in \Code$. For any $\alpha \in \ff$, $\alpha \neq 0$, 
the word $\bldc + \alpha(\bldy_1 - \bldy_2)$ differs from $\bldc$ in coordinate $\ell$ (the difference $\alpha(\bldy_1 - \bldy_2)$ 
can take any nonzero value in $\ff$ for different nonzero values of $\alpha$), yet 
the coordinates in $S \backslash \{ \ell \}$ and the $i$-th coordinate in the corresponding information words are identical. Therefore, 
for any $\bldc \in \Code$, the bit $y_\ell$ does not effect the value of $x_i$ and so it can be ignored. We obtain that $y_\ell$ is
not helpful for recovery of $x_i$, and the set $S \backslash \{ \ell \}$ is sufficient for its retrieval. 
\item
We showed that if there exists $\ell \in S$ such that $y_\ell$ is not uniquely determined, then 
the set $S \backslash \{ \ell \}$ is sufficient for the retrieval. By repeating this argument, we end up with the minimal (nonempty) retrieval set $S_0$ of $x_i$. 
Therefore, every symbol in $\bldy|_{S_0}$ is uniquely determined. 
\end{enumerate}
\end{proof}

The following corollary follows directly from Lemma~\ref{lemma:sets}. 
\begin{corollary}
Let $\Code$ be a linear $(k,n,r,t)$ batch code over $\ff$, $\bldx \in \ff^k$, $\bldy = \C(\bldx)$. Let $S_1, S_2, \cdots, S_t \subseteq [n]$ be $t$ disjoint recovery sets for the coordinate $x_i$. Then, there exist indices $\ell_2 \in S_2$, $\ell_3 \in S_3$, $\cdots$, $\ell_t \in S_t$, such that 
if we fix the values of all coordinates of $\bldy$ indexed by the sets $S_1, S_2 \backslash \{ \ell_2 \}, S_3 \backslash \{ \ell_3 \}, \cdots,  S_t \backslash \{ \ell_t \}$, then the values of the coordinates of $\bldy$ indexed by $\{\ell_2, \ell_3, \cdots, \ell_t \}$ are uniquely determined. 
\label{cor:sets}
\end{corollary}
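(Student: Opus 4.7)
The plan is to apply Lemma~\ref{lemma:sets} separately to each of the recovery sets $S_2, S_3, \ldots, S_t$, and then to combine the results using the fact that $S_1$ by itself already determines $x_i$.

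First, since $S_1$ is a recovery set for $x_i$, fixing the values of $\bldy$ indexed by $S_1$ automatically fixes the value of $x_i$. So after the first block of coordinates is known, we may treat $x_i$ as known in all subsequent arguments. Next, for each $j \in \{2, 3, \ldots, t\}$, I would invoke Lemma~\ref{lemma:sets} with the recovery set $S_j$ to produce an index $\ell_j \in S_j$ such that fixing $x_i$ together with $\bldy|_{S_j \setminus \{\ell_j\}}$ uniquely determines $y_{\ell_j}$. This gives the desired indices $\ell_2, \ldots, \ell_t$.

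Finally, combining these two observations: once $\bldy|_{S_1}$ and $\bldy|_{S_j \setminus \{\ell_j\}}$ for $j = 2, \ldots, t$ are fixed, the value of $x_i$ is determined from $S_1$ and then each $y_{\ell_j}$ is determined individually from $x_i$ and $\bldy|_{S_j \setminus \{\ell_j\}}$ by the lemma. Since the sets $S_1, S_2, \ldots, S_t$ are pairwise disjoint, none of these $t-1$ applications of the lemma interfere with each other, so the indices $\ell_2, \ldots, \ell_t$ chosen in distinct sets can be taken simultaneously.

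The proof is essentially bookkeeping; there is no real obstacle. The only point worth being careful about is that disjointness of the $S_j$ is what lets us pick the $\ell_j$'s independently and apply the lemma $t-1$ times in parallel, rather than having to worry about one $\ell_j$ belonging to another $S_{j'}$ and complicating the conditioning.
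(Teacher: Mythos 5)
Your proof is correct and matches the paper's intent: the paper gives no explicit proof, stating only that the corollary ``follows directly'' from Lemma~\ref{lemma:sets}, and your argument --- using $S_1$ to pin down $x_i$ and then applying the lemma once to each of $S_2,\ldots,S_t$, with disjointness guaranteeing the applications do not interfere --- is exactly the intended derivation.
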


The following theorem is the main result in this section. 

\begin{theorem}
\label{thrm:main} 
Let $\C$ be a linear $(k,n,r,t)$ batch code over $\ff$ with the minimum distance $d$. Then, 
\begin{equation}
d \leq n-k-(t-1)\left(\left\lceil\frac{k}{rt-t+1}\right\rceil-1\right)+1 \; . 
\label{eq:main-statement}
\end{equation}
\end{theorem}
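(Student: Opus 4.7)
The plan is to follow the Yekhanin-style strategy of \cite{Yekhanin2012,RPDV2014,RMV2014} and construct a set $\Gamma \subseteq [n]$ with $\dim \C|_\Gamma < k$ and $|\Gamma|$ as large as possible; any such $\Gamma$ forces the existence of a nonzero codeword supported on $[n]\setminus\Gamma$, and therefore $d \leq n-|\Gamma|$. Writing $a = \lceil k/(rt-t+1)\rceil$, it will suffice to produce $\Gamma$ with $|\Gamma| \geq (k-1)+(t-1)(a-1)$.

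I would construct $\Gamma$ iteratively, starting from $\Gamma_0 = \emptyset$. In each of $a-1$ rounds, I pick $i \in [k]$ for which $x_i$ cannot be recovered from $\bldy|_{\Gamma_{j-1}}$ (such an $i$ exists because the current rank stays strictly below $k$), take $t$ pairwise disjoint recovery sets $S_1,\ldots,S_t$ of $x_i$ guaranteed by the batch property applied to the query $(i,i,\ldots,i)$, and set $\Gamma_j = \Gamma_{j-1}\cup U$ with $U = S_1\cup\cdots\cup S_t$. The main obstacle is to show that each such round increases the rank deficit $|\Gamma_j|-\dim\C|_{\Gamma_j}$ by at least $t-1$. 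The key observation is that the non-recoverability of $x_i$ from $\Gamma_{j-1}$ forces the minimal recovery sets inside each $S_m$ (obtained in the proof of Lemma~\ref{lemma:sets}) to meet the complement of $\Gamma_{j-1}$, so the indices $\ell_2,\ldots,\ell_t$ produced by Corollary~\ref{cor:sets} can all be chosen to lie in $U\setminus\Gamma_{j-1}$. Since the Corollary then makes each $y_{\ell_m}$ a linear function of the coordinates in $U\setminus\{\ell_2,\ldots,\ell_t\}$, adjoining these indices last does not affect the rank, and I would conclude
\[
\dim\C|_{\Gamma_j} - \dim\C|_{\Gamma_{j-1}} \leq |U\setminus\Gamma_{j-1}| - (t-1),
\]
which falls short of the size increment $|U\setminus\Gamma_{j-1}|$ by exactly $t-1$. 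Because $|U|\leq rt$, the rank also grows by at most $rt-t+1$ per round, so after $a-1$ rounds $\dim\C|_{\Gamma_{a-1}} \leq (a-1)(rt-t+1) < k$ by the choice of $a$.

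To finish, I would extend $\Gamma_{a-1}$ by adjoining individual coordinates that each raise the rank by exactly one, stopping when the rank reaches $k-1$ (always possible since $\dim\C = k$). This extension preserves the rank deficit, so the final $\Gamma$ satisfies $\dim\C|_\Gamma = k-1 < k$ and $|\Gamma| \geq (k-1) + (a-1)(t-1)$. The bound $d \leq n-|\Gamma|$ then yields $d \leq n-k-(t-1)(a-1)+1$, which is the claimed inequality.
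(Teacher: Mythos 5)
Your proposal is correct and follows essentially the same route as the paper: the same iterative exploitation of $t$ disjoint recovery sets via Corollary~\ref{cor:sets} to gain a deficit of $t-1$ per round, with the same round count $\left\lceil\frac{k}{rt-t+1}\right\rceil-1$; the only difference is bookkeeping, since you track $|\Gamma|-\dim\C|_{\Gamma}$ and invoke $d\le n-|\Gamma|$, while the paper tracks $\log_q|\C_j|=k-\dim\C|_{\Gamma_j}$ for nested subcodes and applies the Singleton bound to the punctured residual code --- an identical computation. Your explicit justification that the indices $\ell_2,\dots,\ell_t$ can be chosen in $U\setminus\Gamma_{j-1}$ (because a minimal recovery set contained in $\Gamma_{j-1}$ would contradict the non-recoverability of $x_i$) spells out a point the paper leaves implicit in its count of the fresh coordinates $a_j$.
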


\begin{proof}
Apply algorithm in Figure~\ref{algo} to the linear $(k,n,r,t)$ batch code $\C$.

\begin{figure}[h!!]
\centering\footnotesize
\begin{tabular}{l}
\hline
\hline
$\,$ \\
{\bf Input:} linear $(k,n,r,t)$ batch code $\Code$ \\
1: $\C_0=\C$ \\
2: $j=0$ \\
3: while $|\C_j|>1$ do \\
4: $j=j+1$ \\
5: Choose the multiset $\{i_j^1,i_j^2,\dots,i_j^t\}\subseteq[k]$ and disjoint subsets \\ 
~~~~$S_{j}^1,\dots,S_{j}^t\in[n]$, where $S_j^\ell$ is a recovery set for the information \\
~~~~bit $i_j^\ell$, such that there exist at least two codewords in $\C_{j-1}$ \\
~~~~that differ in (at least) one coordinate\\
6: Let $\bldsigma_j\in\Sigma^{|S_j|}$ be the most frequent element in the multiset \\
~~~$\{\bldx|_{S_j}:\bldx\in\C_{j-1}\}$, where $S_j=S_{j}^1\cup\dots\cup S_{j}^t$ \\
7: Define $\C_j \triangleq \{\bldx:\bldx\in\C_{j-1},\bldx|_{S_j}=\bldsigma_j\}$ \\
8: end while \\
{\bf Output:} $\C_{j-1}$ \\
$\,$ \\
\hline
\hline
\end{tabular}
\caption{Algorithm for constructing a subcode}
\label{algo}
\end{figure}

First, we prove that the algorithm is well-defined, that is, we are able to choose $i_j^1,i_j^2,\dots,i_j^t$ and $S_{j}^1, S_{j}^2,\dots,S_{j}^t$ in line 5. Let 
$$\Gamma_j=\bigcup_{j'\in [j]}S_{j'}\text{~~and~~}\Lambda_j=\bigcup_{j'\in [j]}\{i_{j'}^1,i_{j'}^2,\dots,i_{j'}^t\} \; .$$ 
By condition in line 3, we have $|\C_{j-1}|>1$, thus implying that there is at least one coordinate such that two codewords $\bldy_1$ and $\bldy_2$ in $\C_{j-1}$ are not equal. Let $\bldx_1$ and $\bldx_2$ be information words, corresponding to $\bldy_1$ and $\bldy_2$, respectively. It follows that $\bldx_1 \neq \bldx_2$, 
and there is at least one coordinate, say $x_\ell$, such that $\bldx_1$ and $\bldx_2$ disagree in that coordinate. 
Then, in Step 5 of the algorithm in Figure~\ref{algo}, we can use the multiset of indices $\{\ell, \ell, \cdots, \ell\}$ and the corresponding recovery sets, 
and this choice is feasible. 


Assume that the algorithm terminates with $j=\tau+1$, and its output is a code $\C_\tau\subseteq\C$. Let $\A_j=\Gamma_j\setminus\Gamma_{j-1}$ and $a_j=|\A_j|$. Then $a_j\leq t \cdot r$. Define the multiset $\B_j \triangleq \Lambda_j \setminus \Lambda_{j-1}$, $\lambda_j=|\B_j|$. Let $\mu_j$ be the number of different symbols in $\B_j$. Here, $1 \leq \mu_j\leq \lambda_j$.

It follows from Corollary~\ref{cor:sets} that if we are to recover in Step 5 the total of $t$ copies of $\mu_j$ different bits, then 
in each of the $t - \mu_j$ sets will be at least one bit fixed. Moreover, 
$$
|\C_{j-1}|\leq |\C_j|\cdot{q^{a_j-(t-\mu_j)}} \; , 
$$
and, therefore,  
\begin{equation}
|\C_j|\geq |\C_{j-1}|/{q^{a_j-(t-\mu_j)}} \; .
\label{eq:decrease-code-size}
\end{equation} 


Next, we bijectively map $\C_\tau$ to a new code $\C'\subseteq
\ff^{n-|\Gamma_\tau|}$ by deleting the coordinates in $\Gamma_\tau$. The minimum distance of $\C'$ is at least $d$. 
Then,
\begin{align*}
\log_q|\C'| = \log_q|\C_\tau| & \geq \log_q|\C|-\sum_{j=1}^\tau(a_j-t+\mu_j) \\
 & \geq k-|\Gamma_\tau|+\sum_{j=1}^\tau(t-\mu_j) \; , 
\end{align*}
where the penultimate transition is due to repeated applications of~(\ref{eq:decrease-code-size}). 
We also have
\begin{align}
0=\log_q|\C_{\tau+1}| & \geq \log_q|\C|-\sum_{j=1}^{\tau+1}(a_j-t+\mu_j) \nonumber \\
 & \geq k-\sum_{j=1}^{\tau+1}(rt-t+\mu_j) \; .
\label{eq:mu-bound}
\end{align}
By applying the Singleton bound to $\C'$, we obtain
\begin{eqnarray}
d & \leq & (n-|\Gamma_\tau|)-(k-|\Gamma_\tau|+\sum_{j=1}^\tau(t-\mu_j))+1 \nonumber \\
& = & n-k+1-\sum_{j=1}^\tau(t-\mu_j) \; . 
\label{d}
\end{eqnarray}
The right-hand side of~(\ref{d}) decreases when each of $\mu_j$, $1 \leq j \leq \tau$, decreases. In order to minimize the bound, we choose $\mu_j=1$, for $1\leq j\leq \tau$. Hence,
\begin{align*}
d & \leq n-k+1-(t-1) \, \tau \\
 & \leq n-k+1-(t-1)\left(\left\lceil\frac{k}{rt-t+1}\right\rceil-1\right) \; ,
\end{align*}
where the last transition is due to~(\ref{eq:mu-bound}). 
\end{proof}

\begin{example}
Consider a binary $(k,n,1,t)$ batch code with minimum distance $d=t$ and length $n = k \cdot t$. 
Since each information symbol can be recovered from $t$ sets of size $1$, 
such code can be obtained by concatenation of the trivial binary code of length $k$ with binary repetition code of length $t$. 

Then,~(\ref{eq:main-statement}) is equivalent to 
$$
n \geq tk+d-t = tk \; , 
$$ 
and we observe that the bound in Theorem~\ref{thrm:main} is tight in that case.
\end{example}

\begin{corollary}
\label{coro1} Let $\C$ be a linear $(k,n,r,t)$ batch code over $\ff$ with the minimum distance $d$. Then,
\begin{align}
\label{bound} n\geq \max_{1\leq \beta\leq t, \beta \in \nn}\left\{(\beta-1)\left(\left\lceil\frac{k}{r\beta-\beta+1}\right\rceil-1\right)+k+d-1\right\}.
\end{align}
\end{corollary}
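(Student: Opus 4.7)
The plan is to derive~(\ref{bound}) by reinterpreting the code for various values of the query multiplicity and then applying Theorem~\ref{thrm:main} parametrically.

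First, I would observe the monotonicity of the batch property in the parameter $t$: if $\C$ is a linear $(k,n,r,t)$ batch code with minimum distance $d$, then for any integer $\beta$ with $1 \leq \beta \leq t$, the code $\C$ is also a linear $(k,n,r,\beta)$ batch code with the same minimum distance $d$. Indeed, given any multiset $\{i_1,\dots,i_\beta\}\subseteq[k]$, one can simply pad it to a multiset of size $t$ (by repeating any index) and use the corresponding $t$ pairwise disjoint recovery sets guaranteed by Definition~\ref{def:batch-code}; discarding $t-\beta$ of them leaves $\beta$ pairwise disjoint recovery sets of size at most $r$. The underlying code $\C$ (and hence its minimum distance) is untouched by this reinterpretation.

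Next, for each fixed $\beta \in \{1,2,\dots,t\}$, I would apply Theorem~\ref{thrm:main} to $\C$ viewed as a $(k,n,r,\beta)$ batch code. The theorem yields
\begin{equation*}
d \;\leq\; n - k - (\beta-1)\left(\left\lceil\frac{k}{r\beta-\beta+1}\right\rceil - 1\right) + 1,
\end{equation*}
which, upon solving for $n$, becomes
\begin{equation*}
n \;\geq\; (\beta-1)\left(\left\lceil\frac{k}{r\beta-\beta+1}\right\rceil - 1\right) + k + d - 1.
\end{equation*}

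Finally, since this inequality holds for every integer $\beta$ in the range $1 \leq \beta \leq t$, I take the maximum of the right-hand side over $\beta$ in that range, which yields exactly~(\ref{bound}). I do not anticipate a genuine obstacle here; the only subtlety is the monotonicity observation in the first paragraph, which must be stated carefully to justify reusing Theorem~\ref{thrm:main} at each value of $\beta$ without altering $d$. The case $\beta = 1$ just recovers the Singleton bound $n \geq k + d - 1$, confirming that the maximum is well-defined.
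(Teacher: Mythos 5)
Your proposal is correct and matches the paper's proof: the paper likewise observes that $\C$ is a $(k,n,r,\beta)$ batch code for every integer $1 \leq \beta \leq t$ and then invokes Theorem~\ref{thrm:main} for each such $\beta$. Your added justification of the monotonicity in $\beta$ (padding the multiset and discarding surplus recovery sets) is a correct elaboration of a step the paper leaves implicit.
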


\begin{proof}
The code $\C$ is a $(k,n,r,\beta)$ batch code for all integers $\beta$, $1 \leq \beta \leq t$. Therefore, the claim follows from Theorem~\ref{thrm:main}.
\end{proof}

\begin{corollary}
\label{coro2} Let $\C$ be a linear \emph{systematic} $(k,n,r,t)$ batch code over $\ff$ with the minimum distance $d$. Then,
\begin{multline}
n\geq \max_{2\leq \beta\leq t, \beta \in \nn}\Bigg\{(\beta-1)\left(\left\lceil\frac{k}{r\beta-\beta-r+2}\right\rceil-1\right) \\ +k+d-1\Bigg\} \; .
\label{eq:systematic}
\end{multline}
\end{corollary}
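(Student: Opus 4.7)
The plan is to mirror the argument of Theorem~\ref{thrm:main} with one extra ingredient: in a systematic code the singleton $\{i\}$ is always a valid recovery set for the information symbol $x_i$. In line 5 of the algorithm in Figure~\ref{algo}, whenever we process a multiset $\{i,i,\ldots,i\}$ of $\beta$ copies of the same index, I would arrange the $\beta$ pairwise disjoint recovery sets so that one of them is exactly $\{i\}$. This can be done because, given any $\beta$ disjoint recovery sets $S^1,\ldots,S^\beta$ for $x_i$, either some $S^j$ already contains $i$, in which case we swap $S^j$ for $\{i\}$ and disjointness is preserved, or $i$ lies in none of them, in which case we discard one $S^j$ and insert $\{i\}$ in its place. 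With this choice the number of new coordinates added in iteration $j$ obeys
$$ a_j \;\leq\; 1 + (\beta-1)r \;=\; r\beta - r + 1, $$
rather than the weaker bound $a_j \leq r\beta$ used in Theorem~\ref{thrm:main}.

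Next, I would replay the counting argument of Theorem~\ref{thrm:main} essentially verbatim. Inequality~(\ref{eq:decrease-code-size}) becomes $|\C_j| \geq |\C_{j-1}|/q^{(r\beta - r + 1) - (\beta - \mu_j)}$, and the analogue of~(\ref{eq:mu-bound}) reads
$$ k \;\leq\; \sum_{j=1}^{\tau+1}\bigl(r\beta - r + 1 - \beta + \mu_j\bigr). $$
Taking $\mu_j = 1$ for every $j$ (which minimises the right-hand side of the systematic analogue of~(\ref{d}), exactly as in the non-systematic case) yields $\tau \geq \lceil k/(r\beta - \beta - r + 2)\rceil - 1$, and substituting this into~(\ref{d}) with $\beta$ in place of $t$ delivers the desired inequality~(\ref{eq:systematic}). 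The maximisation over $2\leq\beta\leq t$ follows exactly as in Corollary~\ref{coro1}, since $\C$ is simultaneously a systematic $(k,n,r,\beta)$ batch code for every such $\beta$.

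The step that will require the most care is checking that the algorithm remains well-defined once we insist that one recovery set per iteration be a singleton $\{i\}$. Whenever $|\C_{j-1}|>1$, linearity and injectivity of the encoder produce two information words $\bldx_1 \neq \bldx_2$ differing in some coordinate $i \in [k]$; the systematic property then forces the corresponding codewords to disagree at position $i$ as well, so $i \notin \Gamma_{j-1}$ and the multiset $\{i,i,\ldots,i\}$ is a legal choice in line 5. Combined with the swap argument above, this puts $\{i\}$ into the collection of recovery sets without violating disjointness. Everything else is a routine adaptation of the proof of Theorem~\ref{thrm:main}; the restriction $\beta \geq 2$ simply ensures that the systematic saving of $r-1$ in the denominator is nontrivial, the case $\beta = 1$ collapsing to the Singleton bound and contributing nothing to the maximum.
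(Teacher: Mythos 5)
Your proposal is correct and follows essentially the same route as the paper: exploiting the systematic singleton recovery set to tighten the bound to $a_j\leq r\beta-r+1$, rerunning the counting of Theorem~\ref{thrm:main} to get $\tau\geq\lceil k/(r\beta-\beta-r+2)\rceil-1$, and maximizing over $\beta$ as in Corollary~\ref{coro1}. The swap argument and the well-definedness check you include are details the paper leaves implicit, but they do not change the approach.
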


\begin{proof}
If the batch code is systematic, then there exists a recovery set of size one for each information symbol. Then, in the proof of Theorem~\ref{thrm:main}, we have $a_j\leq|S_j|\leq rt-r+1$. We obtain:
$$
\tau\geq \left\lceil\frac{k}{rt-t-r+2}\right\rceil-1 \; , 
$$
and from~(\ref{d}), it follows that 
$$
d\leq n-k+1-(t-1) \left(\left\lceil\frac{k}{rt-t-r+2}\right\rceil-1\right) \; . 
$$ 
The claimed result follows by an argument similar to the one used in Corollary~\ref{coro1}.
\end{proof}

\medskip 

\begin{example} 
Take $r=2$ and $t=\beta=2$. Then by (\ref{eq:systematic}), $n\geq \left\lceil\frac{k}{2}\right\rceil+k+d-2$. This bound can be attained by the linear systematic codes of minimum distance 2 encoding $\bldx=\{x_i \; : \; 1\leq i\leq k\}$ into $\bldy=\{y_i \; : \; 1\leq i\leq n\}$, where: 
\begin{itemize}
\item $y_i=x_i$ for $1\leq i\leq k$, and $y_j=x_{2(j-k)-1}+x_{2(j-k)}$ for $k+1\leq j\leq k+k/2$, when $k$ is even,
\item $y_i=x_i$ for $1\leq i\leq k$, $y_j=x_{2(j-k)-1}+x_{2(j-k)}$ for $k+1\leq j\leq k+(k-1)/2$, and $y_{k+(k+1)/2}=x_k$, when $k$ is odd.
\end{itemize}
\end{example}

\medskip

\begin{example}
Consider batch codes, which are obtained by taking simplex codes as suggested in \cite{WKC2015}. 
It was shown therein that, for example, the linear code, formed by the generator matrix 
\[
\left(
\begin{matrix}
1&0&0&1&1&0&1 \\
0&1&0&1&0&1&1 \\
0&0&1&0&1&1&1
\end{matrix}
\right)
\]
is a $(3,7,2,4)$ batch code with the minimum distance $d=4$. Here $r=2$ and $t = 4$. 

Pick $\beta=2$. The right-hand side of equation~(\ref{eq:systematic}) can be re-written as 
\begin{multline*}
 (2-1)\left(\left\lceil\frac{3}{2\cdot 2- 2 - 2 + 2}\right\rceil-1\right) +3+ 4 - 1 = 7 \; ,
\end{multline*}
and therefore the bound in~(\ref{eq:systematic}) is attained with equality for the choice $\beta = 2$.  

Assume that the pair $\{x_1, x_1\}$ is to be recovered. The sets $\{x_1\}$ and $\{x_2,x_1+x_2\}$ are sufficient for this task. 
Then, $S_1=\{1,2,4\}$, and the most frequent element in Step 6 of the algorithm is $\bldsigma_1=(1,0,1)$. We obtain 
the code $\C_1$ with two codewords (when puncturing the coordinates in $S_1$): 
\begin{equation}
\C_1 = \left\{ \left( \, 0 \; 1 \; 0 \; 1 \, \right) , \; \left( \, 1 \; 0 \; 1 \; 0 \, \right) \right\} \; . 
\end{equation}
The code $\C_1$ is an MDS code of dimension $1$ and minimum distance $4$. 
\end{example}

\vskip 5pt
Next, we turn to the asymptotic analysis, when $n \rightarrow \infty$. 
Let the {\em relative rate} of the code $\Code$ be $R \triangleq k/n$, and the {\em relative minimum distance} be $\delta \triangleq d/n$. 
Assume that $r$ and $t$ are fixed. From~(\ref{bound}), by ignoring the $o(1)$ term, and by choosing the optimal value $\beta = t$, 
we have
\begin{equation}
\delta \le 1 - \left(\frac{rt}{rt - t + 1} \right)R \; . 
\label{eq:bound-asymptotic}
\end{equation}
A variation of this bound can be obtained by using the Plotkin bound instead of the Singleton bound in the proof of 
Theorem~\ref{thrm:main}.

Indeed, take the subcode $\C_j$ for some $j$ in the algorithm in Figure~\ref{algo}, and assume that $\mu_j = 1$. Denote the length of $\C_j$ by $n'=n-|\Gamma_j|$, and the dimension by $k'\geq k-|\Gamma_j|+(t-1)j$. Delete the coordinates $\Gamma_j$ and bijectively map $\C_j$ to a new code $\C'\subseteq \Sigma^{n-|\Gamma_j|}$, which has the minimum distance at least $d$. 

By applying the Plotkin bounds to $\C'$, we have: if $ d > (1-1/q)n'$, then $q^{k'}\leq\frac{qd}{qd-(q-1)n'}$, where $q$ is the underlying field size. 
Hence, if $d>(1-1/q)(n-|\Gamma_j|)$,
\begin{align}
\label{plotkin} q^{k-|\Gamma_j|+(t-1)j}\left(qd-(q-1)(n-|\Gamma_j|)\right)\leq qd.
\end{align}

When $j=0$,~(\ref{plotkin}) is essentially a classical Plotkin bound. Plotkin bound is tight, and in particular it is attained by the Simplex code. 

\section{Further Improvement}

In this section, by employing some additional ideas, we show an improvement on the bound~(\ref{bound}). 
Hereafter, we assume that $\mu_j=1$ for all $1\leq j\leq \tau$ (i.e. in each step $i$ of the algorithm, the set $S_i$  recovers multiple copies 
of one symbol). 
Additionally, we assume that 
\begin{equation}
k \geq 2(rt-t+1)+1 \; , 
\label{eq:k-range}
\end{equation}
which implies that $\tau \ge 2$. 

Let $\epsilon$ and $\lambda$ be some positive integers,
whose values will be established in the sequel. 
Consider the following three cases.
\begin{description}
\item[\bf Case 1: $\exists i,j \; : \; |S_i \cap S_j|\geq \epsilon$.] $\,$
\newline
W.l.o.g. assume $i=1$ and $j=2$ (the proof is similar for any choice of $i$ and $j$). Then, 
$a_2 \leq |S_2|-|S_1\cap S_2| \leq rt - \epsilon$. In that case, in equation~(\ref{eq:mu-bound}) in Theorem~\ref{thrm:main}, we obtain
\begin{align*}
0 = \log_q|\C_{\tau+1}| & \geq k-(rt-t+1)(\tau+1)+\epsilon \; , 
\end{align*}
which implies that  $\tau\geq \left\lceil\frac{k+\epsilon}{rt-t+1}\right\rceil-1$.
Hence,
\begin{align*}
d & \leq n-k+1-(t-1) \left( \left\lceil \frac{k+\epsilon}{rt-t+1} \right\rceil -1 \right) \; .
\end{align*}

\item[\bf Case 2: $\forall i,j \; : \; |S_i\cap S_j|<\epsilon$, and $\exists i \; : \; |S_i|\leq rt-\lambda$.] $\,$
\newline
W.l.o.g. assume $i=1$ (the proof is similar for any choice of $i$ and $j$). Then, $a_1\leq|S_1|\leq rt-\lambda$. 
In that case, in equation~(\ref{eq:mu-bound}) in Theorem~\ref{thrm:main}, we obtain
\begin{eqnarray*}
0 & = & \log_q|\C_{\tau+1}| \\
& \geq & k- \left( (rt-t+1)\tau+rt-\lambda-t+1 \right) \; ,  
\end{eqnarray*}
which implies that $\tau\geq \left\lceil\frac{k+\lambda}{rt-t+1}\right\rceil-1$.
Hence,
\begin{align*}
d & \leq n-k+1-(t-1) \left( \left\lceil\frac{k+\lambda}{rt-t+1}\right\rceil-1 \right) \; . 
\end{align*}

\item[\bf Case 3: $\forall i,j \, : \; |S_i\cap S_j|<\epsilon$, and $\forall i \; : \;  |S_i| > rt-\lambda$.] $\,$ \\
\noindent
In this case, akin to inclusion-exclusion principle, the total number of coordinates contained in any of the sets $S_i$, is given by:
\begin{eqnarray*}
\label{bound3} 
n & \geq & \sum_{i=1}^{k}|S_i|-{k\choose 2}(\epsilon-1) \\
& \geq & (rt-\lambda+1)k-{k\choose 2}(\epsilon-1) \; .
\end{eqnarray*}
\end{description}

Denote
\begin{eqnarray*}
\sAA & = & \sAA(k, r, d, \beta, \epsilon) \\
&& \triangleq (\beta-1)\left(\left\lceil\frac{k+\epsilon}{r\beta-\beta+1}\right\rceil-1 \right)+k+d-1 \; , \\
\sBB & = & \sBB(k, r, d, \beta, \lambda) \\
&& \triangleq (\beta-1)\left(\left\lceil\frac{k+\lambda}{r\beta-\beta+1}\right\rceil-1\right)+k+d-1 \; , \\
\sCC & = & \sCC(k, r, \beta, \lambda, \epsilon) \\
&& \triangleq (r\beta-\lambda+1)k-{k\choose 2}(\epsilon-1) \; .
\end{eqnarray*}

\begin{theorem}
Let $\C$ be a linear $(k,n,r,t)$ batch code with the minimum distance $d$. Then,
\begin{multline}
\label{boundn} n\geq \\
\max_{\beta\in \nn \cap \left[1,\min\left\{t,\big\lfloor\frac{k-3}{2(r-1)}\big\rfloor\right\}\right]} \left\{ \max_{\epsilon,\lambda\in\nn \cap[1,r\beta-\beta]} \left\{ \min \left\{\sAA,\sBB,\sCC\right\}\right\} \right\} \; . 
\end{multline}
\end{theorem}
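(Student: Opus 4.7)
The plan is to reuse the three-case analysis developed in the paragraphs immediately above, with $\beta$ substituted for $t$ throughout, and then to maximize the resulting bound over the free parameters $\beta$, $\epsilon$, and $\lambda$.

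First, I would invoke Corollary~\ref{coro1} to observe that $\C$ is simultaneously a linear $(k,n,r,\beta)$ batch code for every integer $\beta$ with $1 \leq \beta \leq t$. The additional restriction $\beta \leq \lfloor (k-3)/(2(r-1))\rfloor$ is equivalent to $k \geq 2(r\beta-\beta+1)+1$, which is precisely assumption~(\ref{eq:k-range}) rewritten with $\beta$ in place of $t$. Under this condition the algorithm in Figure~\ref{algo}, when applied to $\C$ viewed as a $(k,n,r,\beta)$ batch code, is guaranteed to run for at least $\tau \geq 2$ iterations, so that Case 1 -- which examines the intersection of two distinct sets $S_i$ and $S_j$ -- is not vacuous.

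Next, for a fixed admissible $\beta$ and any positive integers $\epsilon, \lambda$ in $[1, r\beta-\beta]$, the three conditions defining Cases 1, 2, 3 are mutually exclusive and exhaustive with respect to the sequence of sets produced by the algorithm. Repeating the three derivations verbatim, but with $\beta$ in place of $t$, yields the lower bounds $n \geq \sAA$ (Case 1, via $a_2 \leq r\beta - \epsilon$ plugged into the analogue of~(\ref{eq:mu-bound})), $n \geq \sBB$ (Case 2, via $a_1 \leq r\beta - \lambda$), and $n \geq \sCC$ (Case 3, via the inclusion-exclusion-style count on the union of the $S_i$'s). Since exactly one of these cases occurs, we have $n \geq \min\{\sAA, \sBB, \sCC\}$ no matter which. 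As the bound holds for every admissible triple $(\beta, \epsilon, \lambda)$, taking the maximum over the prescribed ranges produces precisely~(\ref{boundn}).

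The main obstacle, and the step needing the most care, is to perform the $t \to \beta$ substitution consistently in the three case-analyses: in particular the factor $rt - t + 1$ must be replaced by $r\beta - \beta + 1$ in the analogue of~(\ref{eq:mu-bound}) so that the ceiling terms in $\sAA$ and $\sBB$ come out correctly, and the bound on $|S_j|$ must be re-derived as $|S_j| \leq r\beta$ for the Case 3 count. One must also justify the parameter range $\epsilon, \lambda \in [1, r\beta - \beta]$: values of $\epsilon$ exceeding $r\beta$ render the Case 1 premise vacuous and never improve the bound, while values outside the stated range for $\lambda$ either weaken the Case 2 estimate or fail to tighten Case 3. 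Once these substitutions and range verifications are in place, the final maximization is purely a matter of selecting the best admissible choice of $(\beta, \epsilon, \lambda)$.
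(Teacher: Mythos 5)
Your proposal is correct and follows essentially the same route as the paper: view $\C$ as a $(k,n,r,\beta)$ batch code for each admissible $\beta$, run the three-case analysis with $t$ replaced by $\beta$, note that $\beta\leq\big\lfloor\frac{k-3}{2(r-1)}\big\rfloor$ is exactly what makes~(\ref{eq:k-range}) hold (so $\tau\geq 2$ and Case~1 is meaningful), justify $\epsilon,\lambda\in[1,r\beta-\beta]$ from the bounds on $|S_i\cap S_j|$ and $|S_i|$, and conclude $n\geq\min\{\sAA,\sBB,\sCC\}$ since the cases are exhaustive, then maximize over $(\beta,\epsilon,\lambda)$. This matches the paper's proof.
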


\begin{proof}
The code $\C$ is a $(k,n,r,\beta)$ batch code for any integer $\beta$, $1 \le \beta \le t$. Apply the algorithm in Figure~\ref{algo} to the code $\C$.

For any fixed values $\epsilon$ and $\lambda$, one of the above three cases must occur. In the above analysis, we can choose any $\beta\in \nn \cap \left[1,\min\left\{t,\Big\lfloor\frac{k-3}{2(r-1)}\Big\rfloor\right\}\right]$ and any pair of $\epsilon,\lambda \in \nn \cap [1,r\beta-\beta]$. 

Here, the restriction $\beta \le \Big\lfloor\frac{k-3}{2(r-1)}\Big\rfloor$ is required in order to make sure that~(\ref{eq:k-range}) holds, thus implying
$\tau\geq 2$, and the condition $\epsilon,\lambda \in [1,r\beta-\beta]$ is required because $0\leq|S_i\cap S_j|\leq r\beta-\beta$ and $\beta\leq|S_i|\leq r\beta$.
\end{proof}

\begin{table*}[t]
\setlength{\tabcolsep}{12pt}
\def\arraystretch{1.3}
\begin{center}
\begin{tabular}{ r | c c c c c c c} 
\hline \hline
	\vspace{-1ex}
   & Singleton & $r=3$ & $r=3$ & $r=3$ & $r = 5$ & $r=5$ & $r=5$ \\  
	 & bound & $t=2$ & $t=3$ & $t=5$ & $t = 2$ & $t = 3$ & $t=5$ \\ \hline 
$R = 0.0$ &   1.0  & 1.0 & 1.0 & 1.0 & 1.0 & 1.0 & 1.0 \\ 
$R = 0.1$ &   0.9  & 0.88 & 0.87143 & 0.86364 & 0.8889 & 0.88462 & 0.88095\\ 
$R = 0.2$ &   0.8  & 0.76 & 0.74386 & 0.72727 & 0.7778 & 0.76923 & 0.76190 \\ 
$R = 0.3$ &   0.7  & 0.64 & 0.61429 & 0.59091 & 0.6667 & 0.65385 & 0.64286 \\ 
$R = 0.4$ &   0.6 & 0.52 & 0.48571 & 0.45455 & 0.5556 & 0.53846 & 0.52381\\ 
$R = 0.5$ &   0.5  & 0.4 & 0.35714 & 0.31818 & 0.4444 & 0.42308 & 0.40476 \\ 
$R = 0.6$ &   0.4  & 0.28 & 0.22857 & 0.18182 & 0.3333 & 0.30769 & 0.28571 \\ 
$R = 0.7$ &   0.3  & 0.16 & 0.1 & 0.045455 & 0.2222 & 0.19231 & 0.16667 \\ 
$R = 0.8$ &   0.2  & 0.04 & -- & -- & 0.1111 & 0.076923 & 0.047619 \\ 
$R = 0.9$ &   0.1  & -- & -- & -- & 0.0 & -- & -- \\ 
$R = 1.0$ &   0.0  & -- & -- & -- & -- & -- & -- \\ 
\hline \hline
\end{tabular}        
\end{center}
\caption{Bounds on the relative minimum distance $\delta$.}
\label{table}
\end{table*}
\medskip

Since $\epsilon \leq r\beta-\beta$, the bound (\ref{boundn}) is tighter than (\ref{bound}) when 
$$
\left\lceil\frac{k+\epsilon}{r\beta-\beta+1}\right\rceil=\left\lceil\frac{k+\lambda}{r\beta-\beta+1}\right\rceil \ge \left\lceil\frac{k}{r\beta-\beta+1}\right\rceil+1 \; , 
$$ 
and in addition $\sCC \geq \max\{\sAA, \sBB\}$, for some $\beta$, $\epsilon$ and $\lambda$.
\medskip

\begin{example}
Take $k=12$, $r=2$ and $t=3$. The maximum of the right-hand side of expression (\ref{bound}) is obtained when $\beta=3$. 
For that selection of parameters, we have  
$n\geq 15+d\geq 18$.

At the same time, by taking $\beta=3$, $\lambda = 1$ and $\epsilon=1$, we obtain from~(\ref{boundn}) that 
\[
\sAA = \sBB = 17 + d \; \mbox{ and } \; \sCC = 6 \cdot 12 - 0 = 72 \; , 
\]
and so 
\[
n\geq \min \{ 17+ d, 72\} \geq 20 \; . 
\]
\end{example}

\section{Numerical Results}

In this section, we consider the asymptotic regime, where $n \rightarrow \infty$, $R = k/n$ and $\delta = d/n$ are constants,
and $r$ and $t$ are fixed. Then, the bound~(\ref{bound}) can be rewritten as~(\ref{eq:bound-asymptotic}). 
 
Next, consider the bound~(\ref{boundn}). The expression $\sCC$ does not depend on $d$, 
and therefore the trade-off between $R$ and $\delta$ follows from $\sAA$ and $\sBB$, in which case it coincides with~(\ref{eq:bound-asymptotic}).  

In Table~\ref{table} we present some values of the pairs $(\cR, \delta)$ for the Singleton bound and for the bound~(\ref{eq:bound-asymptotic}).  
The entry is marked as `--' when no code with corresponding parameters exists.

\section*{Acknowledgement}

The authors wish to thank Helger Lipmaa and Jens Zumbr\"agel for helpful discussions.

\end{document}